\title{\LARGE \bf
Differentially Private Algorithms for Synthetic Power System Datasets
}
\author{Vladimir Dvorkin and Audun Botterud
\thanks{This work is supported by the Marie Sklodowska-Curie Actions and Iberdrola Group, Grant Agreement \textnumero101034297 – project Learning ORDER.}
\thanks{Vladimir Dvorkin and Audun Botterud are with the Laboratory for Information \& Decision Systems, Massachusetts Institute of Technology (MIT),
Cambridge, MA 02139, USA. Vladimir Dvorkin is also with the MIT Energy Initiative. {\tt\footnotesize \{dvorkin,audunb\}@mit.edu}
}%
}
\newcommand{\xvbox}[2]{\makebox[#1][l]{#2}}
\let\oldnl\nl
\newcommand{\nonl}{\renewcommand{\nl}{\let\nl\oldnl}}
\newtheorem{assumption}{Assumption}
\newtheorem{definition}{Definition}
\newtheorem{theorem}{Theorem}
\newtheorem{remark}{Remark}
\newtheorem{lemma}{Lemma}
\definecolor{maincolor}{HTML}{032F99}
\newcommand{\boundellipse}[3]
{(#1) ellipse [x radius=#2,y radius=#3]
}
\newcommand{\minimize}[1]{\underset{{#1}}{\text{minimize}}}
\newcommand{\maximize}[1]{\underset{{#1}}{\text{maximize}}}
\newcommand{\argmin}[1]{\text{arg}\underset{#1}{\text{min}}}
\newcommand{\st}{\text{subject to}}
\newcommand\norm[1]{\left\lVert#1\right\rVert}
\pgfplotsset{compat=1.10}
\definecolor{five}{rgb}{0.39,0.58,0.69} 
\definecolor{fifteen}{rgb}{0.0,0.35,0.58} 
\definecolor{thirty}{rgb}{0.01,0.22,0.42}
\begin{document}

\maketitle
\thispagestyle{empty}
\pagestyle{empty}

\begin{abstract}
While power systems research relies on the availability of real-world network datasets, data owners (e.g., system operators) are hesitant to share data due to security and privacy risks. To control these risks, we develop privacy-preserving algorithms for the synthetic generation of optimization and machine learning datasets. Taking a real-world dataset as input, the algorithms output its noisy, synthetic version, which preserves the accuracy of the real data on a specific downstream model or even a large population of those. We control the privacy loss using Laplace and Exponential mechanisms of differential privacy and preserve data accuracy using a post-processing convex optimization. We apply the algorithms to generate synthetic network parameters and wind power data.
\end{abstract}

\begingroup
\allowdisplaybreaks

\setlength{\abovedisplayskip}{3.5pt}
\setlength{\belowdisplayskip}{3.5pt}
\setlength{\abovedisplayshortskip}{3.5pt}
\setlength{\belowdisplayshortskip}{3.5pt}

\setlength{\belowcaptionskip}{-10pt}

\section{Introduction}

Power system datasets are instrumental for enhancing solutions to many problems, including optimal power flow (OPF) and wind power forecasting. Releasing real data, however, is challenging due to security and privacy concerns. For example, detailed network datasets inform false data injection attacks on SCADA systems \cite{jin2018power}, and strategic market players may leverage bidding records to maximize profits at the expense of deteriorating social welfare \cite{chen2019learning}. These concerns motivate producing synthetic datasets -- a sanitized version of private datasets that approximately preserve accuracy of data for power system problems. 

Differential privacy (DP) is an algorithmic notion of privacy preservation that enables quantifiable trade-offs between data privacy and accuracy \cite{dwork2014algorithmic}. It has found applications in the context of privacy-preserving OPF computations,  e.g., in distributed control algorithms \cite{dvorkin2020differentially} and in centralized solvers for distribution and high-voltage grids \cite{dvorkin2020adifferentially,zhou2019differential}. It has also been applied to enhance data privacy in machine learning problems in power systems \cite{dvorkin2022privacy}. Models in \cite{dvorkin2020differentially}--\cite{dvorkin2022privacy}, however, only control data leakages in computational results and do not provide synthetic data per se. 

Producing synthetic datasets in a DP way is achieved by corrupting data with privacy-preserving noise \cite{vietri2020new,hardt2012simple}. However, applications of the standard noise-additive DP mechanisms in power systems, such as the Laplace mechanism, may no longer admit a meaningful result. Indeed, adding noise to data may fundamentally alter important statistics and trends in machine learning datasets \cite{wang2022privacy}, e.g., monotonic dependency of wind power generation on wind speed. In the OPF context, the authors in \cite{fioretto2019differential} and \cite{mak2019privacy} showed that the Laplacian perturbation of network parameters almost surely violates feasibility on a broad range of power system benchmarks. As a remedy, they proposed an optimization-based post-processing which restores the accuracy of synthetic OPF datasets without altering the privacy guarantee. The proposed restoration, however, renders the synthetic dataset feasible only for a particular OPF model. Repeated applications of the Laplace mechanism to restore accuracy on many OPF models (e.g., for different instances of variable renewable production) may not be possible, as noise must be scaled respecting the number of repetitions, as per composition of DP \cite{dwork2014algorithmic}. 

In this letter, we introduce two private synthetic dataset generation algorithms for power systems, which ensure the accuracy of synthetic datasets for downstream models. The algorithms enjoy a combination of known DP mechanisms and convex (and mixed-integer) optimization of synthetic data. Specifically, we develop:
\begin{enumerate}
    \item Wind power obfuscation (WPO) algorithm which privately releases historical wind power measurements, while guaranteeing DP of the real data and ensuring accuracy in terms of the outcomes of a downstream regression analysis. 
    \item Transmission capacity obfuscation (TCO) algorithm, which releases synthetic line parameters, while ensuring that they remain feasible and cost-consistent with respect to real data on a population of OPF models. Here, we use both Laplace and Exponential mechanisms of DP to substantially reduce the noise compared to using the Laplace mechanism alone. 
\end{enumerate}

In the next section, we review the basic DP results. In Sections \ref{sec:wpo} and \ref{sec:TCO} we present the two algorithms and their theoretical properties. Section \ref{sec:experiments} provides numerical experiments, and Section \ref{sec:conclusions} concludes. Proofs are relegated to the Appendix. 




\section{Preliminaries on Differential Privacy}\label{sec:preliminaries}
This section reviews basic DP results serving as building blocks for our privacy-preserving dataset generation algorithms.  

Consider a vector $y\in\mathcal{Y}\subseteq\mathbb{R}^{n}$ collecting $n$ private records from a dataset universe $\mathcal{Y}$, and consider a query $Q:\mathcal{Y}\mapsto\mathcal{R}$ as a mapping from universe $\mathcal{Y}$ to some range $\mathcal{R}$. Queries of interest include simple numerical queries, i.e., identity query $Q(y)=y$, and optimization and ML queries, such as OPF or regression models. The goal is to make {\it adjacent} vectors $y,y'\in\mathcal{Y}$ of  private records, statistically indistinguishable in query answers. 
\begin{definition}[Adjacency \cite{chatzikokolakis2013broadening}]\label{def:adjacency} Vectors $y,y'\in\mathcal{Y}$ are said to be $\alpha-$adjacent, denoted as $y\sim_{\alpha}y'$, if $\exists i\in1,\dots,n$, s.t. $y_{j}=y_{j}^{\prime},\forall j\in\{1,\dots,n\}\backslash i$, and $|y_{i}-y'_{i}|\leqslant\alpha$ for $\alpha>0$. That is, the adjacent datasets are different only in one item by at most $\alpha$. 
\end{definition}

A statistical similarity of query answers is captured by the notion of differential privacy, attained through randomization. 

\begin{definition}[$\varepsilon-$differential privacy \cite{dwork2014algorithmic}] A random query $\tilde{Q}:\mathcal{Y}\mapsto\mathcal{R}$ is $\varepsilon-$differentially private if, for any output $r\subseteq\mathcal{R}$ and any $\alpha-$adjacent vectors $y,y'\in\mathcal{Y}$, the following ratio holds
\begin{align}
    \frac{\text{Pr}[\tilde{Q}(y\textcolor{white}{'})=r]}{\text{Pr}[\tilde{Q}(y')=r]}\leqslant\text{exp}(\varepsilon).
\end{align}
where probability is with respect to the randomness of $\tilde{Q}$.
\end{definition}

Privacy parameter $\varepsilon>0$ is termed {\it privacy loss}: with smaller $\varepsilon$ we achieve stronger privacy protection. Indeed, for small $\varepsilon$ we have $\text{exp}(\varepsilon)\approx1+\varepsilon$, thereby making any two adjacent datasets $y$ and $y'$ statistically similar in the answer of the randomized query. 
\begin{theorem}[Sequential composition \cite{dwork2014algorithmic}]\label{th:SC} A series $\tilde{Q}_{1}(y),\dots,$ $\tilde{Q}_{k}(y)$ of $\varepsilon_{i}-$DP queries on dataset $y$ satisfies $\sum_{i=1}^{k}\varepsilon_{i}-$DP.
\end{theorem}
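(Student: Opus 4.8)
The plan is to treat the composed mechanism $\tilde{Q}(y)\defeq(\tilde{Q}_1(y),\dots,\tilde{Q}_k(y))$ as a single randomized query with range $\mathcal{R}_1\times\cdots\times\mathcal{R}_k$ and to bound its privacy loss directly from the $\varepsilon$-DP definition. Fix any $\alpha$-adjacent pair $y\sim_\alpha y'$ (adjacency constrains only the input pair, so the same pair is admissible for every $\tilde{Q}_i$) and any output point $r=(r_1,\dots,r_k)$. First I would handle the case in which the internal randomness of the $k$ queries is mutually independent, so that the joint law factorizes, $\text{Pr}[\tilde{Q}(y)=r]=\prod_{i=1}^{k}\text{Pr}[\tilde{Q}_i(y)=r_i]$, and similarly for $y'$.

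The ratio of interest then factors into per-query ratios,
\begin{align}
\frac{\text{Pr}[\tilde{Q}(y)=r]}{\text{Pr}[\tilde{Q}(y')=r]}=\prod_{i=1}^{k}\frac{\text{Pr}[\tilde{Q}_i(y)=r_i]}{\text{Pr}[\tilde{Q}_i(y')=r_i]}\leqslant\prod_{i=1}^{k}\text{exp}(\varepsilon_i)=\text{exp}\Big(\textstyle\sum_{i=1}^{k}\varepsilon_i\Big),
\end{align}
each factor bounded by the $\varepsilon_i$-DP property of $\tilde{Q}_i$. Because this pointwise (density) bound holds for every $r$, integrating it over an arbitrary measurable set $S\subseteq\mathcal{R}_1\times\cdots\times\mathcal{R}_k$ gives $\text{Pr}[\tilde{Q}(y)\in S]\leqslant\text{exp}(\sum_i\varepsilon_i)\,\text{Pr}[\tilde{Q}(y')\in S]$, which is exactly $\sum_i\varepsilon_i$-DP of the composition.

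The step I expect to take the most care is the general \emph{adaptive} composition, where $\tilde{Q}_{i+1}$ is allowed to depend on the realized outputs $r_1,\dots,r_i$. Then the joint law no longer splits into unconditional marginals, and I would instead write $\text{Pr}[\tilde{Q}(y)=r]=\prod_{i=1}^{k}\text{Pr}[\tilde{Q}_i(y)=r_i\mid r_1,\dots,r_{i-1}]$ and induct on $k$: conditioned on any fixed prefix $r_1,\dots,r_{i-1}$, the $i$-th query is still a fixed $\varepsilon_i$-DP mechanism of its data argument, so each conditional ratio is at most $\text{exp}(\varepsilon_i)$ and the product again collapses to $\text{exp}(\sum_i\varepsilon_i)$. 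The only genuinely technical point is measure-theoretic: for continuous ranges the ``probabilities'' above must be read as densities with respect to a common dominating measure, but this does not affect any of the algebra.
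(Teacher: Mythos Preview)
The paper does not actually supply its own proof of this theorem: it is stated in the preliminaries with a citation to \cite{dwork2014algorithmic} and used as a black box in the proofs of Theorems~\ref{th:DP_WPO} and~\ref{th:DP_OPF}. There is therefore nothing in the paper to compare against.

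Your argument is the standard one and is correct. Factoring the joint law into per-query factors (unconditional in the independent case, conditional in the adaptive case), bounding each ratio by $\text{exp}(\varepsilon_i)$ via the $\varepsilon_i$-DP guarantee, and multiplying is exactly the proof in the cited reference. Your remark that the adaptive case requires writing the chain-rule decomposition and observing that, conditioned on any fixed prefix of outputs, the $i$-th mechanism is still $\varepsilon_i$-DP in its data argument, is the right way to handle the dependence that actually arises in Algorithms~\ref{alg:WPO} and~\ref{alg:OPF} (where later Laplace draws depend on earlier ones through the post-processing). The measure-theoretic caveat about densities is appropriate but, as you note, inessential to the algebra.
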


\begin{theorem}[Post-processing immunity \cite{dwork2014algorithmic}]\label{th:PP} If query $\tilde{Q}$ satisfies $\varepsilon$-DP, then $g\circ\tilde{Q}(y)$, where $g$ is an arbitrary, data-independent post-processing of the query answer, also satisfies $\varepsilon$-DP. 
\end{theorem}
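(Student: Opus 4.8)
The plan is to reduce the claim directly to the definition of $\varepsilon$-DP: for every pair of $\alpha$-adjacent vectors $y\sim_{\alpha}y'$ and every (measurable) set $S$ in the range of $g\circ\tilde{Q}$, I must establish $\text{Pr}[g(\tilde{Q}(y))=S]\leqslant\text{exp}(\varepsilon)\,\text{Pr}[g(\tilde{Q}(y'))=S]$. I would first dispatch the case in which the post-processing map $g$ is deterministic. Define the preimage $T=\{r\in\mathcal{R}:g(r)\in S\}\subseteq\mathcal{R}$. Because $g$ is data-independent, the set $T$ is the same regardless of whether the argument fed to $\tilde{Q}$ is $y$ or $y'$; hence $\text{Pr}[g(\tilde{Q}(y))=S]=\text{Pr}[\tilde{Q}(y)=T]$ and likewise with $y'$. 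Applying the $\varepsilon$-DP guarantee of $\tilde{Q}$ to the event $T$ then yields the desired inequality in one line.

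The second step is to lift this to a randomized post-processing map. Here I would invoke the standard representation of a randomized algorithm as a mixture of deterministic ones: there exist a probability space $(\Omega,\mu)$, with $\omega\sim\mu$ drawn independently of the data and of the internal randomness of $\tilde{Q}$, and a deterministic family $\bar{g}(\cdot,\omega)$ such that $g(r)$ is distributed as $\bar{g}(r,\omega)$. Conditioning on $\omega$ and applying the deterministic case to each slice $\bar{g}(\cdot,\omega)$ gives $\text{Pr}[\bar{g}(\tilde{Q}(y),\omega)=S]\leqslant\text{exp}(\varepsilon)\,\text{Pr}[\bar{g}(\tilde{Q}(y'),\omega)=S]$ for every fixed $\omega$. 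Integrating both sides against $\mu$ preserves the multiplicative bound, and since $\int_{\Omega}\text{Pr}[\bar{g}(\tilde{Q}(y),\omega)=S]\,d\mu(\omega)=\text{Pr}[g(\tilde{Q}(y))=S]$ (and similarly for $y'$), the result follows.

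The only real obstacle is the randomized case, and it is a modeling subtlety rather than a computational one: one must ensure that the auxiliary randomness $\omega$ of the post-processing is independent of the data, so the adjacency structure is untouched and the per-slice deterministic bound genuinely applies. This is precisely where the hypothesis that $g$ is \emph{data-independent} enters — were $g$ allowed to depend on $y$, the preimage $T$ (or the family $\bar{g}(\cdot,\omega)$) could differ between $y$ and $y'$ and the preimage argument would collapse. Once this setup is in place, the remainder is just the definition of $\varepsilon$-DP together with linearity of the integral, so no further estimates are needed.
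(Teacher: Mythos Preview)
Your argument is correct and is exactly the standard proof of post-processing immunity: pull back the output event through the deterministic map to a preimage in $\mathcal{R}$, apply the $\varepsilon$-DP bound there, and then handle randomized $g$ by writing it as a mixture of deterministic maps and integrating. The emphasis on data-independence of $g$ (so that the preimage $T$, or the slice $\bar{g}(\cdot,\omega)$, is the same for $y$ and $y'$) is the right place to invoke the hypothesis.

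That said, there is nothing in the paper to compare against. Theorem~\ref{th:PP} appears in Section~\ref{sec:preliminaries} as a \emph{cited} preliminary result from \cite{dwork2014algorithmic}; the paper does not supply a proof of its own. The only proofs in the appendix are for Lemma~\ref{lm:reg_sens} and Theorems~\ref{th:DP_WPO} and~\ref{th:DP_OPF}, which merely \emph{invoke} post-processing immunity as a black box. Your write-up is essentially the textbook argument from Dwork and Roth, so it is consistent with what the paper is citing.
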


The first results bounds the privacy loss over multiple queries, and the second result states that any data-independent transformation of a DP query answer preserves the privacy guarantee.  

A numerical query is made DP by adding random noise to its output. The noise magnitude depends on the worst-case sensitivity $\delta_{Q}$ of query $Q$ to adjacent datasets, defined as 
\begin{align*}
    \delta_{Q} = \text{max}_{y\sim_{\alpha}y'}\norm{Q(y)-Q(y')}_{1}.
\end{align*}
Let $\text{Lap}(\lambda)^{k}$ denote a sample from the $k-$dimensional Laplace distribution with zero mean and scale parameter $\lambda$. DP of a numerical query is then achieved with the following result.  
\begin{theorem}[Laplace mechanism \cite{dwork2006calibrating}]\label{th:LM} Let $Q$ be a query that maps datasets to $\mathbb{R}^{k}$. Then, the Laplace mechanism which outputs $Q(y)+\text{Lap}(\delta_{Q}/\varepsilon)^{k}$ achieves $\varepsilon-$DP.
\end{theorem}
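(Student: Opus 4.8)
The plan is to compare the output densities of the mechanism on two adjacent datasets pointwise and then lift this to arbitrary measurable outputs by integration. Write $M(y) \defeq Q(y) + \text{Lap}(\delta_{Q}/\varepsilon)^{k}$. Since the $k$ coordinates of the Laplace noise are drawn independently, the probability density of $M(y)$ evaluated at a point $r\in\mathbb{R}^{k}$ is
\begin{align*}
    p_{y}(r) = \prod_{i=1}^{k}\frac{\varepsilon}{2\delta_{Q}}\exp\!\Big(\!-\frac{\varepsilon\,|Q(y)_{i}-r_{i}|}{\delta_{Q}}\Big).
\end{align*}
First I would form the ratio $p_{y}(r)/p_{y'}(r)$ for any $\alpha$-adjacent pair $y\sim_{\alpha}y'$; the normalizing constants $\varepsilon/(2\delta_{Q})$ cancel, leaving
\begin{align*}
    \frac{p_{y}(r)}{p_{y'}(r)} = \exp\!\Big(\frac{\varepsilon}{\delta_{Q}}\sum_{i=1}^{k}\big(|Q(y')_{i}-r_{i}|-|Q(y)_{i}-r_{i}|\big)\Big).
\end{align*}

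Next I would bound the exponent using the reverse triangle inequality applied coordinatewise, $|Q(y')_{i}-r_{i}|-|Q(y)_{i}-r_{i}|\leqslant|Q(y)_{i}-Q(y')_{i}|$, so that the sum is at most $\norm{Q(y)-Q(y')}_{1}$. By the definition of the worst-case sensitivity, $\norm{Q(y)-Q(y')}_{1}\leqslant\delta_{Q} = \text{max}_{y\sim_{\alpha}y'}\norm{Q(y)-Q(y')}_{1}$, hence the exponent is at most $\varepsilon$ and $p_{y}(r)\leqslant\exp(\varepsilon)\,p_{y'}(r)$ for every $r$. Finally, integrating this pointwise inequality over an arbitrary measurable output set $r\subseteq\mathcal{R}\subseteq\mathbb{R}^{k}$ gives $\text{Pr}[M(y)\in r]=\int_{r}p_{y}(t)\,dt\leqslant\exp(\varepsilon)\int_{r}p_{y'}(t)\,dt=\exp(\varepsilon)\,\text{Pr}[M(y')\in r]$, which is precisely the $\varepsilon$-DP condition.

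There is no genuine obstacle here; the only points deserving care are (i) that the bound must hold uniformly over the worst-case adjacent pair, which is exactly why the noise scale is calibrated to $\delta_{Q}$ rather than to the realized value $\norm{Q(y)-Q(y')}_{1}$, and (ii) that the product (independent-coordinate) form of the $k$-dimensional Laplace density is what converts the coordinatewise reverse triangle inequality into a bound in the $\ell_{1}$-sensitivity $\delta_{Q}$ — with a different noise coupling or norm the constants would not line up. The statement then follows directly, and together with Theorems~\ref{th:SC} and~\ref{th:PP} it will serve as the building block for the obfuscation algorithms in the sequel.
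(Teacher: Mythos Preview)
Your argument is correct and is the standard proof of the Laplace mechanism: compare densities pointwise, cancel normalizers, apply the reverse triangle inequality coordinatewise to bound the exponent by $\varepsilon\norm{Q(y)-Q(y')}_{1}/\delta_{Q}\leqslant\varepsilon$, and integrate. There is nothing to compare against in the paper itself, since Theorem~\ref{th:LM} is quoted from \cite{dwork2006calibrating} as a known building block and is not re-proved; your write-up simply supplies the classical derivation that the paper omits.
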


We also like to limit privacy losses when answering non-numerical queries. For example, given a population $\mathcal{Q}$ of queries, we would like to answer the question: {\it which query $Q\in\mathcal{Q}$ gives the maximum value on a private dataset $y$?} The following Exponential mechanism answers this question privately.

\begin{theorem}[Exponential mechanism \cite{mcsherry2007mechanism}]\label{th:EM} Let $\mathcal{Q}$ be a query population, and let $u:\mathcal{Y}\times\mathcal{Q}\mapsto\mathbb{R}$ be the score function with sensitivity $\delta_{u}$. Then, the Exponential mechanism which outputs query $Q\in\mathcal{Q}$ proportionally to $\text{exp}\left(\frac{\varepsilon u(y,Q)}{2\delta_{u}}\right)$ attains $\varepsilon-$DP.
\end{theorem}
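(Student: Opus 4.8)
The plan is to verify the defining likelihood ratio of $\varepsilon$-DP directly from the closed-form output distribution of the Exponential mechanism. Write $\tilde{Q}$ for the mechanism, so that for any candidate query $Q\in\mathcal{Q}$,
\[
\text{Pr}[\tilde{Q}(y)=Q] \;=\; \frac{\exp\!\left(\frac{\varepsilon\, u(y,Q)}{2\delta_{u}}\right)}{\sum_{Q'\in\mathcal{Q}}\exp\!\left(\frac{\varepsilon\, u(y,Q')}{2\delta_{u}}\right)},
\]
with the sum replaced by an integral over $\mathcal{Q}$ in the continuous case and the normalizer assumed finite so the distribution is well defined. I would fix $\alpha$-adjacent datasets $y\sim_{\alpha}y'$ and an arbitrary output $Q$, and then factor the ratio $\text{Pr}[\tilde{Q}(y)=Q]/\text{Pr}[\tilde{Q}(y')=Q]$ into two pieces: the ratio of the two numerators, and the ratio of the two normalizers (with the $y'$-normalizer on top).

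For the numerator ratio, I would bound it by $\exp\!\left(\tfrac{\varepsilon}{2\delta_{u}}\bigl(u(y,Q)-u(y',Q)\bigr)\right)\leqslant\exp(\varepsilon/2)$, which is immediate from the definition of the score sensitivity $\delta_{u}=\max_{y\sim_{\alpha}y',\,Q}|u(y,Q)-u(y',Q)|$. For the normalizer ratio, I would apply the same per-term bound inside the sum: every summand satisfies $\exp\!\left(\tfrac{\varepsilon u(y',Q')}{2\delta_{u}}\right)\leqslant\exp(\varepsilon/2)\exp\!\left(\tfrac{\varepsilon u(y,Q')}{2\delta_{u}}\right)$, so summing over $Q'\in\mathcal{Q}$ preserves the inequality and shows the normalizer ratio is also at most $\exp(\varepsilon/2)$. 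Multiplying the two bounds gives $\text{Pr}[\tilde{Q}(y)=Q]/\text{Pr}[\tilde{Q}(y')=Q]\leqslant\exp(\varepsilon/2)\cdot\exp(\varepsilon/2)=\exp(\varepsilon)$; since $Q$ was arbitrary, this is exactly $\varepsilon$-DP. Swapping the roles of $y$ and $y'$ yields the matching lower bound $\exp(-\varepsilon)$, though only the upper bound is needed.

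The one genuine subtlety — and the reason the score enters with $2\delta_{u}$ rather than $\delta_{u}$ — is that perturbing the dataset shifts the scores in both the numerator and the denominator of the output probability, and in the worst case these two shifts point in the same direction and compound; splitting the budget as $\varepsilon/2+\varepsilon/2$ is precisely what absorbs both effects. A minor technical point is the continuous-$\mathcal{Q}$ case: the sums become integrals, the per-term inequality becomes a pointwise inequality on the integrand, and monotonicity of the integral delivers the same normalizer bound, provided the normalizing integral is finite. I expect the numerator bound to be routine and the normalizer bound to be the step most worth stating carefully.
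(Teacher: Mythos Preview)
Your argument is the standard, correct proof of the Exponential mechanism's $\varepsilon$-DP guarantee: bounding the numerator ratio and the normalizer ratio each by $\exp(\varepsilon/2)$ via the score sensitivity, and multiplying. There is nothing to compare against, however, because the paper does not prove this theorem at all --- it is stated in the preliminaries as a cited result from McSherry and Talwar, alongside the Laplace mechanism, post-processing immunity, and sequential composition, none of which are proved in the paper.
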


For discrete populations of queries, i.e.,  $\mathcal{Q} = Q_{1},\dots,Q_{m}$, we can adopt the report-noisy-max algorithm \cite[\S3.3]{dwork2014algorithmic} -- an efficient implementation of the exponential mechanism for finite $\mathcal{Q}$. 

Next, we leverage these results to design DP algorithms for synthetic dataset generation as applicable to power systems.

\section{Privacy-Preserving Wind Power Dataset Release}\label{sec:wpo}

Consider the problem of a wind turbine operator (data owner) who wants to release synthetic wind power records in a differentially private way. The real dataset $\mathcal{D}=\{(x_{1},y_{1}),\dots,(x_{m},y_{m})\}$ consists of $m$ records, where each record $i$ includes some public weather data $x_{i}\in\mathbb{R}^{n}$ and a private power measurement $y_{i}\in\mathbb{R}$ subject to obfuscation. The release of the synthetic dataset takes the form $\tilde{\mathcal{D}}=\{(x_{1},\tilde{y}_{1}),\dots,(x_{m},\tilde{y}_{m})\}$, where $\tilde{y}_{i}$ is a synthetic measurement. To provide formal privacy guarantees in this release, we could perturb each real record $y_{i}$ with the Laplace mechanism of Theorem \ref{th:LM}. However, the application of the Laplace mechanism alone is ignorant of the accuracy of the resulting dataset in the downstream analysis, and such a release may not be useful. We discuss the dataset accuracy in terms of the outcomes of a regression (downstream) problem
\begin{align}
    \minimize{\beta}\quad&\norm{X\beta-y} + \lambda\norm{\beta} ,\label{base_reg}
\end{align}
which minimizes the loss function by optimally choosing regression weights $\beta\in\mathbb{R}^{p}$, given some small regularization parameter $\lambda$ to prevent overfitting. Here, matrix $X^{m\times p}$ collects weather features; we do not require $p=n$, as model \eqref{base_reg} may not include all meteorological data from $\mathcal{D}$ and may also enjoy certain feature transformations (e.g., squared wind speeds). The goal is thus to release a synthetic dataset $\tilde{\mathcal{D}}$ whose regression loss and weights are consistent with those on the real dataset. On a particular vector of measurements $\overline{y}$, we denote the regression loss and weights by $\ell(\overline{y})$ and $\beta(\overline{y})$, respectively. To estimate them on the real dataset privately, we need to bound their sensitivities to adjacent datasets.

\begin{lemma}[Regression sensitivity bounds]\label{lm:reg_sens} For any two $\alpha$-adjacent vectors of wind power measurements $y,y'\in\mathbb{R}^{m}$, the worst-case sensitivity of regression weights is bounded as 
\begin{align*}
    \delta_{\beta}=\text{max}_{y\sim_{\alpha}y'}\norm{\beta(y)-\beta(y')}_{1}\leqslant\norm{(X^{\top}X + \lambda I)^{-1}X^{\top}}_{1}\alpha,
\end{align*}
and the worst-case sensitivity of the regression loss $$\delta_{\ell}=\text{max}_{y\sim_{\alpha}y'}\norm{\ell(y) - \ell(y')}_{1}$$ is bounded by the solution of the following problem:
\begin{align*}
    \delta_{\ell}\leqslant\maximize{i=1,\dots,m}\;\norm{(X(X^{\top}X + \lambda I)^{-1}X^{\top} - I)(e_{i}\circ\alpha)}.
\end{align*}
\end{lemma}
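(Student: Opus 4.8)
The plan is to exploit the closed-form solution map of the (ridge) regression problem in \eqref{base_reg}, which is \emph{linear} in the measurement vector, together with the fact that $\alpha$-adjacency (Definition~\ref{def:adjacency}) confines any perturbation to a single coordinate. First I would write the first-order optimality conditions of \eqref{base_reg}, read in the ridge sense that the form of the claimed bounds dictates (squared loss and squared penalty), to get the optimal weights $\beta(y)=My$ with $M\defeq(X^{\top}X+\lambda I)^{-1}X^{\top}$ and the optimal residual $X\beta(y)-y=(XM-I)y$. For an $\alpha$-adjacent pair $y\sim_{\alpha}y'$ we then have $y-y'=(y_i-y_i')e_i$ for a single index $i$ with $|y_i-y_i'|\leqslant\alpha$.

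For the weight bound, linearity gives $\beta(y)-\beta(y')=(y_i-y_i')Me_i$, so $\norm{\beta(y)-\beta(y')}_1=|y_i-y_i'|\,\norm{Me_i}_1\leqslant\alpha\,\norm{Me_i}_1$. Since $\norm{Me_i}_1$ is precisely the $\ell_1$-norm of the $i$-th column of $M$, and the induced $\ell_1\to\ell_1$ matrix norm equals the largest absolute column sum, maximizing over the perturbed coordinate (and over adjacent pairs) yields $\delta_{\beta}\leqslant\alpha\,\norm{M}_1=\norm{(X^{\top}X+\lambda I)^{-1}X^{\top}}_1\,\alpha$; this is in fact tight because the signs of the entries within a fixed column of $M$ are immaterial.

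For the loss bound, with $\ell(y)=\norm{X\beta(y)-y}=\norm{(XM-I)y}$, the reverse triangle inequality applied to the two optimal residuals gives $|\ell(y)-\ell(y')|\leqslant\norm{(XM-I)(y-y')}=|y_i-y_i'|\,\norm{(XM-I)e_i}\leqslant\norm{(XM-I)(e_i\circ\alpha)}$, where $e_i\circ\alpha=\alpha e_i$. Since the perturbed coordinate may be any of $1,\dots,m$, the worst case over $i$ produces exactly $\delta_{\ell}\leqslant\max_{i=1,\dots,m}\norm{(X(X^{\top}X+\lambda I)^{-1}X^{\top}-I)(e_i\circ\alpha)}$, the claimed bound.

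The one delicate point is the very first step: the stated bounds presuppose the ridge closed form, so one must read \eqref{base_reg} with squared terms, or — for the literal un-squared objective — replace the linearity of the solution map by a Lipschitz estimate of the minimizer obtained from strong convexity of the objective (the $\lambda$-term guarantees a positive modulus) before localizing to a one-coordinate perturbation. One should also fix whether $\ell$ is the fit term alone or includes the regularizer; in the latter case Step~3 acquires an extra term $\lambda\bigl(\norm{\beta(y)}-\norm{\beta(y')}\bigr)$, bounded by one more application of the reverse triangle inequality. The remaining ingredients — the column-sum characterization of $\norm{\cdot}_1$ and the reverse triangle inequality — are entirely routine.
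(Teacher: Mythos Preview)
Your proposal is correct and follows essentially the same route as the paper: the ridge closed form $\beta(y)=(X^{\top}X+\lambda I)^{-1}X^{\top}y$, the induced $\ell_1$ matrix norm (the paper phrases this as H\"older, you phrase it via the column-sum characterization---equivalent) for $\delta_{\beta}$, and the reverse triangle inequality on the residual $(XM-I)y$ for $\delta_{\ell}$, followed by the one-coordinate restriction from adjacency. Your caveat about reading \eqref{base_reg} in the ridge sense (squared terms) is apt---the paper silently invokes the ridge closed form without commenting on this---but otherwise the arguments coincide.
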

\begin{proof}See Appendix \ref{app:lemma1}.
\end{proof}

Importantly, the two bounds only depend on public information, i.e., weather features, regularization and adjacency parameters, and completely independent from private measurements $y$.

\subsection{Differentially Private WPO Algorithm}

\begin{algorithm}[t]
\setlength{\abovedisplayskip}{0pt}
\setlength{\belowdisplayskip}{0pt}
\setlength{\abovedisplayshortskip}{0pt}
\setlength{\belowdisplayshortskip}{0pt}
\SetKwInOut{Input}{Input}
\SetKwInOut{Output}{Output}
\caption{Differentially private WPO}\label{alg:WPO}
\Input{$\text{WP records}\;\mathcal{D}=\{(x_{1},y_{1}),\dots,(x_{m},y_{m})\}$; \\ DP param. $\varepsilon_{1},\varepsilon_{2},\alpha$; regularization param. $\gamma_{\beta},\gamma_{y}$}
\BlankLine 
\Output{%
    \xvbox{4mm}{Synthetic WP records $\tilde{\mathcal{D}}=\{(x_{1},\tilde{y}_{1}),\dots,(x_{m},\tilde{y}_{m})\}$}
   }
\BlankLine 
Initialize synthetic measurements $\tilde{y}^{0} = y + \text{Lap}\left(\alpha/\varepsilon_{1}\right)^{m}$
\BlankLine 
Laplace mechanism to privately compute regression results 
\begin{align*}
    &\overline{\ell}=\ell(y) + \text{Lap}\left(\delta_{\ell}/\varepsilon_{2}\right)\quad\overline{\beta}=\beta(y) + \text{Lap}\left(\delta_{\beta}/\varepsilon_{2}\right)
\end{align*}
\BlankLine 
Post-processing optimization of $\tilde{y}^{0}$:
\begin{subequations}\label{PP_reg}
\begin{align}
    \tilde{y}\in\argmin{\tilde{y}}\quad& \norm{\overline{\ell} - \ell}+\gamma_{\beta}\norm{\overline{\beta} - \beta} + \gamma_{y}\norm{\tilde{y}^{0} - \tilde{y}}\label{PP_reg:obj}\\
    \st\quad
    &\mathbb{0} \leqslant \tilde{y} \leqslant \mathbb{1},\label{PP_reg:con_1}\\
    &\beta\in\argmin{\beta}\;\norm{X\beta-\tilde{y}} + \lambda\norm{\beta}\label{PP_reg:con_2}
\end{align}
\end{subequations}
\textbf{return:} synthetic wind power measurements $\tilde{y}$
\end{algorithm}
\setlength{\textfloatsep}{5pt}

We now introduce the privacy-preserving wind power obfuscation (WPO) Algorithm \ref{alg:WPO}. The algorithm takes the real dataset, privacy and regularization parameters as inputs, and produces a consistent synthetic dataset of wind power records. It relies on Lemma \ref{lm:reg_sens} to privately reveal regression results on a real dataset, and then leverages them to restore the consistency of the synthetic dataset using a post-processing optimization. Specifically, at Step 1, the algorithm initializes the synthetic datasets using the Laplace mechanism. Then, at Step 2, it computes the approximate regression loss and weights using the Laplace mechanism twice. At the last Step 3, the synthetic dataset undergoes optimization-based post-processing to ensure that the regression results on the synthetic dataset are consistent with those on the real data. 

The post-processing is based on the hierarchical optimization \eqref{PP_reg}, where the upper-level problem \eqref{PP_reg:obj}--\eqref{PP_reg:con_1} optimizes the synthetic dataset $\tilde{y}$ in response to the outcomes of the embedded lower-level regression problem \eqref{PP_reg:con_2}. In the upper-level objective, the first term improves the consistency in terms of regression loss, while the second and third terms are used for regularizing the synthetic dataset. Indeed, the losses $l$ and $\overline{l}$ can be matched with infinitely many assignments of $\beta$ and $\tilde{y}$. Thus, by setting a small parameter $\gamma_{\beta}>0$, the matching is achieved with a close approximation of the regression weights on the real data. Similarly, by setting a small parameter $\gamma_{y}>0$, we regularize the new data points according to the perturbation of real data points at Step 1. Finally, the upper-level constraint \eqref{PP_reg:con_1} guarantees that the synthetic dataset respects the nominal power limits.

While the hierarchical optimization \eqref{PP_reg} is originally intractable, we arrive at its tractable  convex reformulation by substituting the lower-level problem \eqref{PP_reg:con_2} with the following constraints:
\begin{subequations}\label{LL_ref}
\begin{align}
    &\beta = (X^{\top}X + \lambda I)^{-1}X^{\top}\tilde{y},\label{LL_ref_1}\\
    &\norm{X\beta - \tilde{y}}\leqslant\ell,\label{LL_ref_2}
\end{align}
\end{subequations}
where the linear constraint \eqref{LL_ref_1} is the closed-form solution to regression weights on vector $\tilde{y}$, and the conic constraint \eqref{LL_ref_2} is used to compute the loss on the same vector and weights. 

We now state the $\varepsilon-$DP guarantee of this algorithm.

\begin{theorem}[DP of the WPO Algorithm]\label{th:DP_WPO} Setting $\varepsilon_{1}=\varepsilon/2$ and $\varepsilon_{2}=\varepsilon/4$ renders Algorithm \ref{alg:WPO} $\varepsilon-$DP for $\alpha-$adjacent wind power datasets.  
\end{theorem}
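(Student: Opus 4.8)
The plan is to read Algorithm \ref{alg:WPO} as a composition of three sub-mechanisms, charge a privacy loss to each, and then glue the pieces together with Theorems \ref{th:SC} and \ref{th:PP}. Step~1 is exactly the Laplace mechanism of Theorem \ref{th:LM} applied to the identity query $Q(y)=y$ mapping $\mathbb{R}^m$ to $\mathbb{R}^m$: for any $\alpha$-adjacent $y\sim_\alpha y'$ the vectors differ in a single coordinate by at most $\alpha$, so $\delta_Q=\max_{y\sim_\alpha y'}\norm{y-y'}_1=\alpha$, and since the injected noise has scale $\alpha/\varepsilon_1=\delta_Q/\varepsilon_1$, Theorem \ref{th:LM} certifies that $y\mapsto\tilde y^0$ is $\varepsilon_1$-DP.

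Next I would handle Step~2. The two releases $\overline\ell=\ell(y)+\text{Lap}(\delta_\ell/\varepsilon_2)$ and $\overline\beta=\beta(y)+\text{Lap}(\delta_\beta/\varepsilon_2)^{p}$ are Laplace mechanisms applied to the queries $y\mapsto\ell(y)$ and $y\mapsto\beta(y)$. Lemma \ref{lm:reg_sens} supplies valid upper bounds on their worst-case $\ell_1$-sensitivities, namely $\delta_\ell$ and $\delta_\beta$; since calibrating Laplace noise to any over-estimate of the sensitivity only strengthens the guarantee of Theorem \ref{th:LM}, each of the two releases is $\varepsilon_2$-DP, and by sequential composition (Theorem \ref{th:SC}) the pair $(\overline\ell,\overline\beta)$ is $2\varepsilon_2$-DP.

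I then combine the steps. The concatenated mechanism producing $(\tilde y^0,\overline\ell,\overline\beta)$ is, by Theorem \ref{th:SC}, $(\varepsilon_1+2\varepsilon_2)$-DP. Step~3 returns $\tilde y$ as a minimizer of \eqref{PP_reg} (equivalently, of its convex reformulation obtained by replacing \eqref{PP_reg:con_2} with \eqref{LL_ref}), and the data enter this optimization \emph{only} through the already-privatized quantities $\tilde y^0,\overline\ell,\overline\beta$ together with the public constants $X,\lambda,\gamma_\beta,\gamma_y$ — the private vector $y$ is never touched. Hence $\tilde y=g(\tilde y^0,\overline\ell,\overline\beta)$ for a data-independent map $g$, and post-processing immunity (Theorem \ref{th:PP}) implies the whole algorithm remains $(\varepsilon_1+2\varepsilon_2)$-DP. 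Substituting $\varepsilon_1=\varepsilon/2$ and $\varepsilon_2=\varepsilon/4$ yields $\varepsilon_1+2\varepsilon_2=\varepsilon$, which is the claim.

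The main obstacle is not a calculation but the bookkeeping that licenses Theorems \ref{th:SC}--\ref{th:PP}: I must argue carefully that Step~3 is genuine post-processing, i.e., that neither \eqref{PP_reg} nor its reformulation \eqref{LL_ref} reintroduces a dependence on $y$, and that the possible set-valuedness of the $\argmin$ does not break the argument — it does not, because Theorem \ref{th:PP} permits arbitrary, even randomized, data-independent $g$, so any tie-breaking rule among minimizers is admissible. A secondary point worth stating explicitly is that Lemma \ref{lm:reg_sens} delivers only upper bounds on $\delta_\ell$ and $\delta_\beta$, which is sufficient precisely because over-calibrating the Laplace noise cannot weaken $\varepsilon_2$-DP.
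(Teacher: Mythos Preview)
Your proposal is correct and follows essentially the same route as the paper: charge $\varepsilon_{1}$ to the Laplace-based identity release in Step~1, $\varepsilon_{2}$ to each of the two Laplace releases in Step~2 via Lemma~\ref{lm:reg_sens} and Theorem~\ref{th:LM}, observe that Step~3 is pure post-processing, and combine by Theorems~\ref{th:SC}--\ref{th:PP} to get $\varepsilon_{1}+2\varepsilon_{2}=\varepsilon$. Your extra remarks on tie-breaking in the $\argmin$ and on Lemma~\ref{lm:reg_sens} providing only upper bounds are fine clarifications the paper leaves implicit.
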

\begin{proof}See Appendix \ref{app:th_WPO}.
\end{proof}

\section{Privacy-Preserving DC-OPF Dataset Release}\label{sec:TCO}

We now consider a problem of a power system operator who wants to release a synthetic network dataset in a differentially private way. The goal is to guarantee not only privacy but also accuracy with respect to possible downstream computations on the synthetic dataset. We consider the DC-OPF problem as the main computational task. We also specifically focus on the release of transmission capacity data, though other network parameters (loads, generation cost, etc.) can be released similarly.

The OPF problem models operations in a power network with $n$ buses and $e$ transmission lines. The goal is to compute the least-cost generation dispatch $p\in\mathbb{R}^{n}$ while satisfying electric loads $d\in\mathbb{R}_{+}^{n}$. Generators produce at linear costs $c\in\mathbb{R}_{+}^{n}$ within the minimum and maximum technical limits, encoded in set $\mathcal{P}=\{p\;|\;\underline{p}\leqslant p\leqslant \overline{p}\}.$ The DC power flows are modeled using the power transfer distribution matrix $F\in\mathbb{R}^{e\times n}$, and resulting power flows $\varphi=F(p-d)\in\mathbb{R}^{e}$ are limited by line capacities $\overline{f}\in\mathbb{R}_{+}^{e}$. 

Suppose that there is a set $1,\dots,m$ of OPF models, where each model $i$ includes a specific cost vector $c_i$, generation limits in set $\mathcal{P}_{i}$, and electric loads $d_{i}$. The transmission data, i.e., topology encoded in $F$ and capacity $\overline{f}$, remain the same. Each OPF model $i$ is then described by a tuple $\langle c_{i},d_{i},\mathcal{P}_{i},F,\overline{f}\rangle.$ Given the real OPF dataset $\langle c_{i},d_{i},\mathcal{P}_{i},F,\overline{f}\rangle_{i=1}^{m}$, the goal is to produce its synthetic version $\langle c_{i},d_{i},\mathcal{P}_{i},F,\overline{\varphi}\rangle_{i=1}^{m}$ with an obfuscated transmission capacity  vector $\overline{\varphi}$, which permits feasible and cost-consistent -- with respect to real data -- OPF outcomes across $m$ models.

Towards the goal, we formulate a DC-OPF problem parameterized by the synthetic transmission capacity $\overline{\varphi}$: 
\begin{subequations}\label{OPF_base}
\begin{align}
    \mathcal{C}_{i}(\overline{\varphi})=\minimize{p\in\mathcal{P}_{i}}\quad& c_{i}^{\top}p\label{OPF_base:obj}\\
    \st\quad
    &\mathbb{1}^{\top}(p-d_{i})=0,\label{OPF_base:bal}\\
    &\norm{F(p-d_{i})}_{1}\leqslant \overline{\varphi}\label{OPF_base:flo},
\end{align}
\end{subequations}
where the objective function \eqref{OPF_base:obj} minimizes OPF costs, denoted by $\mathcal{C}_{i}(\overline{\varphi})$, subject to power balance \eqref{OPF_base:bal}, flow and generation limits in \eqref{OPF_base:flo} and $\mathcal{P}_{i}$, respectively; all specific to a particular model $i$. We make two assumptions on problem \eqref{OPF_base}. 
\begin{assumption}[Feasibility]\label{ass:feasibility}
$\mathcal{C}_{i}(\overline{f})$ exists for all $i=1,\dots,m$.
\end{assumption}
\begin{assumption}[Sensitivity]\label{ass:sensitivity} Let $\overline{\varphi}_{1}\sim_{\alpha}\overline{\varphi}_{2}$ be two $\alpha-$adjacent vectors of transmission capacities. Then, $\norm{\mathcal{C}_{i}(\overline{\varphi}_{1}) - \mathcal{C}_{i}(\overline{\varphi}_{2})}_{1}\leqslant\overline{c}\alpha,\forall i=1,\dots,m$, where $\overline{c}$ is the maximum cost coefficient.   
\end{assumption}

The former requires OPF feasibility of the real transmission capacity data on all historical records, and the latter bounds the change in OPF costs to the cost of the most expensive unit. 


As a perturbed vector of line capacities may not be OPF feasible, we additionally introduce the relaxed OPF problem to give a numerical value to infeasibility of a particular vector $\overline{\varphi}$:
\begin{subequations}
\begin{align}
    \mathcal{C}_{i}^{R}(\overline{\varphi})=\minimize{p\in\mathcal{P}_{i},v\geqslant\mathbb{0}}\quad& c_{i}^{\top}p + \psi\mathbb{1}^{\top}v\label{OPF_rlxd:obj}\\
    \st\quad
    &\mathbb{1}^{\top}(p-d_{i})=0,\label{OPF_rlxd:bal}\\
    &\norm{F(p-d_{i})}_{1}\leqslant \overline{\varphi} + v\label{OPF_rlxd:flo},
\end{align}
\end{subequations} 
where the slack variable $v\in\mathbb{R}^{e}$ renders the OPF solution feasible for any assignment $\overline{\varphi}$. That is, infeasible vectors $\overline{\varphi}$ for problem \eqref{OPF_base} translate into costs $\mathcal{C}_{i}^{R}(\overline{\varphi})$ using penalty scalar $\psi\gg \overline{c}$.

\subsection{Differentially Private TCO Algorithm}

\begin{algorithm}[t]
\setlength{\abovedisplayskip}{0pt}
\setlength{\belowdisplayskip}{0pt}
\setlength{\abovedisplayshortskip}{0pt}
\setlength{\belowdisplayshortskip}{0pt}
\SetKwInOut{Input}{Input}
\SetKwInOut{Output}{Output}
\caption{Differentially private TCO for DC-OPF}\label{alg:OPF}
\Input{$\text{OPF dataset}\; \langle c_{i},d_{i},\mathcal{P}_{i},F,\overline{f}\rangle_{i=1}^{m}$;  DP parameters $\varepsilon_{1},\varepsilon_{2},\alpha$; iteration limit $T$}
\BlankLine 
\Output{%
    \xvbox{4mm}{Synthetic OPF data $\langle c_{i},d_{i},\mathcal{P}_{i},F,\overline{\varphi}\rangle_{i=1}^{m}$}
   }
\BlankLine 
{\it \underline{Step 1}:} Initialize synthetic dataset $\overline{\varphi}^{0} = \overline{f} + \text{Lap}\left(\alpha/\varepsilon_{1}\right)^{e}$
\BlankLine 
\For{$t\in1,\dots,T$ }{
\BlankLine 
{\it \underline{Step 2}:} Exponential mech. to find the worst-case model:

\For{$i\in1,\dots,m$ }{
    $
    \Delta\mathcal{C}_{i}=\norm{\mathcal{C}_{i}(\overline{f}) - \mathcal{C}_{i}^{R}(\overline{\varphi}^{t-1})}_{1} + \text{Lap}\left(\overline{c}\alpha/\varepsilon_{2}\right)$
}
\textbf{return:} index $k^{t}\leftarrow\text{argmax}_{i}\;\Delta\mathcal{C}_{i}$  of the worst-case model
\BlankLine
{\it \underline{Step 3}:} Laplace mechanism to compute the worst-case cost:
$$\overline{\mathcal{C}}_{t}=\mathcal{C}_{k^{t}}(\overline{f}) + \text{Lap}\left(\overline{c}\alpha/\varepsilon_{2}\right)$$
\BlankLine 
{\it \underline{Step 4}:} Post-processing optimization of synthetic data:
\begin{subequations}\label{OPF_PP}
\begin{align}
    \overline{\varphi}^{t}\in\argmin{\overline{\varphi}}\quad& \textstyle\sum_{\tau=1}^{t}\norm{\overline{\mathcal{C}}_{\tau} - \mathcal{C}_{k^{\tau}}(\overline{\varphi})} + \norm{\overline{\varphi}-\overline{\varphi}^{t-1}}\label{OPF_PP:obj}\\
    \st\quad&\text{DC-OPF}\;\eqref{OPF_base}\;\text{parameterized by}\;\overline{\varphi},\forall \tau \label{OPF_PP:con}
\end{align}
\end{subequations}
}
\textbf{return:} synthetic line capacity $\overline{\varphi}\leftarrow\overline{\varphi}^{T}$
\end{algorithm}

We now introduce the privacy-preserving transmission capacity obfuscation (TCO) Algorithm \ref{alg:OPF} for DC-OPF datasets. Here, 
Step 1 initializes synthetic dataset $\overline{\varphi}^{0}$ by perturbing real data using the Laplace mechanism, and the remaining steps post-process the synthetic dataset. Step 2 runs the report-noisy-max algorithm, a discrete version of the Exponential mechanism \cite{dwork2014algorithmic}, to privately identify the worst-case OPF model. Here, the score function, $\Delta\mathcal{C}$, is the $L_{1}$ norm which measures the distance between OPF costs on real and synthetic datasets.    
Then, Step 3 uses the Laplace mechanism to estimate the cost of the worst-case OPF model on the real data. 
Step 4 post-processes the synthetic dataset using a bilevel optimization problem \eqref{OPF_PP}, where $\mathcal{C}_{k^{\tau}}(\overline{\varphi})$ is the OPF costs obtained from the embedded DC-OPF problem \eqref{OPF_base} for some fixed vector $\overline{\varphi}$. By embedding the OPF problem as a constraint, we require feasibility and cost-consistency of $\overline{\varphi}$ with respect to the worst-case OPF models identified at previous steps. In addition, with the last term in \eqref{OPF_PP:obj}, we regularize the solution $\overline{\varphi}$ to make sure that the changes in synthetic capacities are only guided by feasibility and cost-consistency requirements. 

The major difference between the WPO and TCO algorithms is that the latter terminates after repeating Steps 2 to 4 $T$ times. The OPF feasibility for one model does not guarantee feasibility across the whole population of models. By increasing $T$, the TCO algorithm finds more worst-case OPF models with the largest cost $\mathcal{C}_{i}^{R}$ of violations, thereby improving the accuracy (feasibility) of the synthetic dataset across the population.

To arrive at a tractable mixed-integer reformulation of problem \eqref{OPF_PP}, we substitute constraint \eqref{OPF_PP:con} with the Karush--Kuhn--Tucker conditions of problem \eqref{OPF_base}; we refer to \cite[\S6]{pozo2017basic} for details. Notably, problem \eqref{OPF_PP} only relies on obfuscated data. Hence,  by Theorem \ref{th:PP}, it does not induce any privacy loss. We now state the $\varepsilon-$DP guarantee of the entire algorithm.

\begin{theorem}[DP of the TCO Algorithm]\label{th:DP_OPF} Setting $\varepsilon_{1}=\varepsilon/2$ and $\varepsilon_{2}=\varepsilon/(4T)$ renders Algorithm \ref{alg:OPF} $\varepsilon-$DP for $\alpha-$adjacent DC-OPF datasets.
\end{theorem}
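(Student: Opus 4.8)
The plan is to decompose Algorithm~\ref{alg:OPF} into its privacy-consuming primitives, bound the loss of each, and then compose. The private capacity vector $\overline{f}$ is read only in Step~1 (its Laplace perturbation), in Step~2 (the report-noisy-max selection, which evaluates $\mathcal{C}_i(\overline{f})$, well defined by Assumption~\ref{ass:feasibility}), and in Step~3 (the Laplace release of $\mathcal{C}_{k^t}(\overline{f})$). Step~4 is a deterministic function of the already-released quantities $\overline{\varphi}^{0}$, $\{\overline{\mathcal{C}}_\tau\}$, $\{k^\tau\}$ and of the public data $c_i,d_i,\mathcal{P}_i,F$ only --- note that $\mathcal{C}_{k^\tau}(\overline{\varphi})$ in \eqref{OPF_PP:obj} is the optimal value of \eqref{OPF_base} parameterized by the \emph{decision variable} $\overline{\varphi}$, not by the private $\overline{f}$ --- so by Theorem~\ref{th:PP} it contributes no privacy loss; likewise the vector $\overline{\varphi}^{t-1}$ fed to Steps~2--4 at iteration $t$ is a post-processing of the outputs of iterations $1,\dots,t-1$.

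First I would treat Step~1: the identity map $\overline{f}\mapsto\overline{f}$ has $\ell_1$-sensitivity at most $\alpha$ over $\alpha$-adjacent datasets, so by Theorem~\ref{th:LM} the release $\overline{\varphi}^0=\overline{f}+\text{Lap}(\alpha/\varepsilon_1)^e$ is $\varepsilon_1$-DP. Next, fix an iteration $t$ and condition on all earlier outputs, so that $\overline{\varphi}^{t-1}$ is a constant. For Step~2, the score $u(\overline{f},Q_i)=\norm{\mathcal{C}_i(\overline{f})-\mathcal{C}_i^R(\overline{\varphi}^{t-1})}_{1}$ obeys, by the reverse triangle inequality and Assumption~\ref{ass:sensitivity}, $\lvert u(\overline{f},Q_i)-u(\overline{f}',Q_i)\rvert\leqslant\norm{\mathcal{C}_i(\overline{f})-\mathcal{C}_i(\overline{f}')}_{1}\leqslant\overline{c}\alpha$ whenever $\overline{f}\sim_\alpha\overline{f}'$, hence $\delta_u\leqslant\overline{c}\alpha$; the report-noisy-max step with noise $\text{Lap}(\overline{c}\alpha/\varepsilon_2)$ is therefore $\varepsilon_2$-DP as the finite implementation of the Exponential mechanism, Theorem~\ref{th:EM}. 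For Step~3, with $k^t$ held fixed the query $\overline{f}\mapsto\mathcal{C}_{k^t}(\overline{f})$ has sensitivity at most $\overline{c}\alpha$ directly by Assumption~\ref{ass:sensitivity}, so $\overline{\mathcal{C}}_t=\mathcal{C}_{k^t}(\overline{f})+\text{Lap}(\overline{c}\alpha/\varepsilon_2)$ is $\varepsilon_2$-DP by Theorem~\ref{th:LM}.

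It remains to compose. Since each step reads $\overline{f}$ jointly with the outputs of all preceding steps, I would invoke the adaptive form of sequential composition (Theorem~\ref{th:SC}, which for pure $\varepsilon$-DP extends verbatim to adaptively chosen mechanisms) over the $2T+1$ private primitives: one $\varepsilon_1$-DP perturbation, and, per iteration, one $\varepsilon_2$-DP Exponential step and one $\varepsilon_2$-DP Laplace step, with every Step~4 optimization absorbed for free through Theorem~\ref{th:PP}. The total privacy loss is $\varepsilon_1+2T\varepsilon_2=\varepsilon/2+2T\cdot\varepsilon/(4T)=\varepsilon$, as claimed.

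I expect the main obstacle to be the bookkeeping of adaptivity and post-processing rather than any sharp inequality: one must argue carefully that $\overline{\varphi}^{t-1}$ and the index $k^\tau$ may be treated as data-independent constants \emph{conditioned on the prior releases}, so that the chain of mechanisms is a legitimate adaptive composition and Step~4 provably never touches $\overline{f}$. A secondary technical point to state with care is the bound $\delta_u\leqslant\overline{c}\alpha$ on the report-noisy-max score, since Assumption~\ref{ass:sensitivity} is phrased for $\mathcal{C}_i$ and must be transferred to $\norm{\mathcal{C}_i(\cdot)-\mathcal{C}_i^R(\overline{\varphi}^{t-1})}_{1}$ via the reverse triangle inequality, together with confirming that the chosen Laplace scale $\overline{c}\alpha/\varepsilon_2$ indeed buys $\varepsilon_2$-DP for the report-noisy-max variant used.
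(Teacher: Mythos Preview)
Your proposal is correct and follows essentially the same decomposition as the paper's proof: Step~1 is an $\varepsilon_1$-DP Laplace perturbation of the identity query, each of the $T$ iterations of Steps~2 and~3 contributes $\varepsilon_2+\varepsilon_2$ via the Exponential (report-noisy-max) and Laplace mechanisms under Assumption~\ref{ass:sensitivity}, Step~4 is free by post-processing, and sequential composition yields $\varepsilon_1+2T\varepsilon_2=\varepsilon$. If anything, you are more explicit than the paper in justifying the score sensitivity $\delta_u\leqslant\overline{c}\alpha$ via the reverse triangle inequality and in tracking the adaptive conditioning on prior releases.
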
 
\begin{proof}See Appendix \ref{app:th_OPF}.
\end{proof}

\begin{remark}[Relation to prior work] When $m=T=1$, Step 2 in Algorithm \ref{alg:OPF} becomes redundant, and the algorithm replicates the Laplace-based PLO mechanism in \cite{fioretto2019differential}, when applied to the capacity obfuscation in the DC-OPF setting. The difference between the two algorithms reveals when the synthetic dataset must be accurate, i.e., feasible and cost consistent, on a population of OPF models, i.e., $m\gg1$. Indeed, the worst-case OPF model and cost can also be estimated using Laplace perturbations, but the induced privacy loss will reach $mT\varepsilon_{2}$. The combination of the Exponential and Laplace mechanisms in Steps 2 and 3 in Algorithm \ref{alg:OPF}, however, reduces the privacy loss to $2T\varepsilon_{2}$.
\end{remark}

\section{Numerical Experiments}\label{sec:experiments}

In our experiments, we fix the privacy loss $\varepsilon=1$ and vary adjacency parameter $\alpha$, thereby increasing the range of adjacent datasets, which are required to be statistically indistinguishable. All data and codes to replicate the results are available online:  
\begin{center}
    \url{https://github.com/wdvorkin/SyntheticData}
\end{center}

\subsection{Synthetic Wind Power Records Generation}

We first demonstrate the WPO Algorithm \ref{alg:WPO} for a privacy-preserving release of wind power records. We use the theoretical wind power curve of the General Electric GE-2.75.103 turbine from \cite{staffell2016using}, considering a medium range of wind speeds between $2.5$ and $12.5$ $\sfrac{m}{s}$, where the power output is most sensitive to speed. We then perturb each power output with a Gaussian noise $\mathcal{N}(0,0.1)$ to introduce some variation among the records; the dataset is thus not completely real, but resembles real-life datasets which we hope to eventually release with our algorithm. In the dataset, we have $m=1,000$ normalized power measurements $y\in[0,1]^{m}$ and corresponding wind speeds $x$. 

We specify regression \eqref{base_reg} as follows. First, we transform the wind speed records using $p=5$ Gaussian radial basis functions:
\begin{align*}
    \varphi_{j}(x) = e^{-\left(\frac{1}{2}\norm{x-\mu_{j}}\right)^2}, \forall j=1,\dots,p,
\end{align*}
positioned at $\mu=\{2.5,5,7.5,10,12.5\}\;\sfrac{m}{s}$. Each feature in $X$ is then obtained as $X_{ij}=\varphi_{j}(x_{i}),$ $\forall i=1,\dots,m,$ $ \forall j=1,\dots,p.$  Finally, we set the regularization parameter as $\lambda=10^{-3}$.

We use the standard Laplace mechanism as a reference method, which perturbs power records as $\tilde{y}=y+\text{Lap}(\alpha/\varepsilon)^{m}$, and projects them onto feasible range $[0,1]^{m}$. The resulting synthetic records satisfy $\varepsilon-$DP for $\alpha-$adjacent datasets, as per Theorems \ref{th:PP} and \ref{th:LM}. To guarantee $\varepsilon-$DP for the WPO algorithm, we set $\varepsilon_{1}$ and $\varepsilon_{2}$ according to Theorem \ref{th:DP_WPO}. We also set  regularization parameters $\gamma_{y},\gamma_{\beta}=10^{-5}$ for post-processing in \eqref{PP_reg}.

Figure \ref{fig:WPO_sum_plot} demonstrates some examples of synthetic wind power dataset releases. Here, we measure adjacency $\alpha$ in $\%$ of the nominal capacity of the wind turbine. Observe, that with increasing $\alpha$, the regression-agnostic Laplace mechanism yields a larger deviation of the synthetic records from the real data. While the WPO algorithm introduces even more noise, i.e., $\times3$ more noise at Step 1 and more noise at Step 2 due to sensitivities $\delta_{\ell}$ and $\delta_{\beta}$ growing in $\alpha$, the post-processing of the noisy records at Step 3 results in a better accuracy of the synthetic dataset. In Fig. \ref{fig:loss_summary}, we demonstrate the statistical significance of this observation by plotting the loss on synthetic datasets under the two methods. With increasing $\alpha$, the Laplace mechanism demonstrates a notable deviation from the loss on real data. The WPO algorithm, on the other hand, converges to the real loss on average, and does not significantly deviate from the average. 

\begin{figure*}
    \centering
    \includegraphics[width=0.95\textwidth]{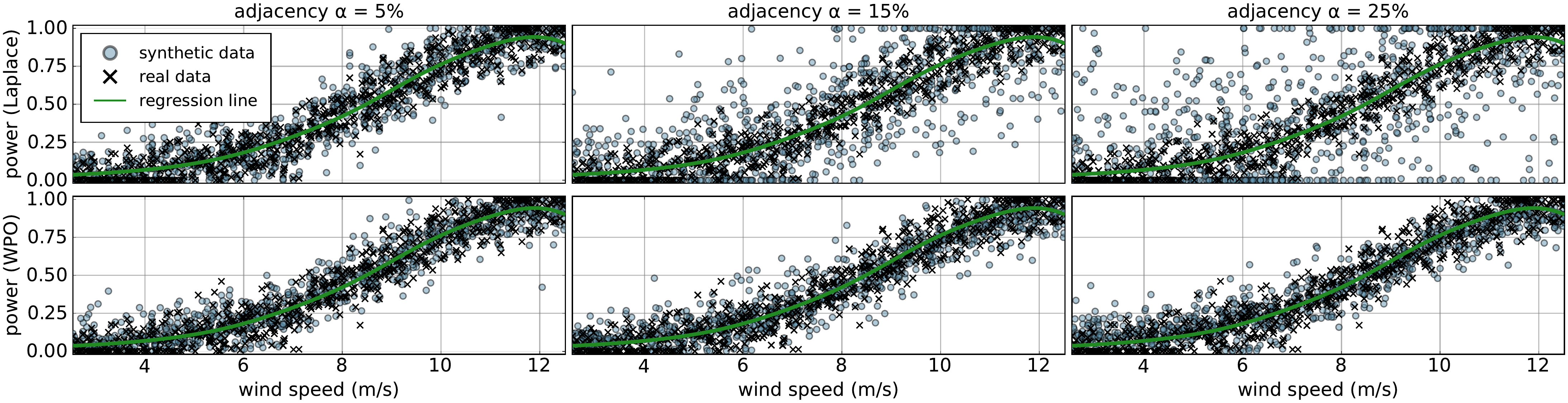}
    \caption{Wind power dataset obfuscation for the General Electric 2.75 MW turbine using the Laplace mechanism (top row) and the WPO algorithm (bottom row). }
    \label{fig:WPO_sum_plot}
\end{figure*}

\begin{figure}
    \centering
    \includegraphics[width=0.46\textwidth]{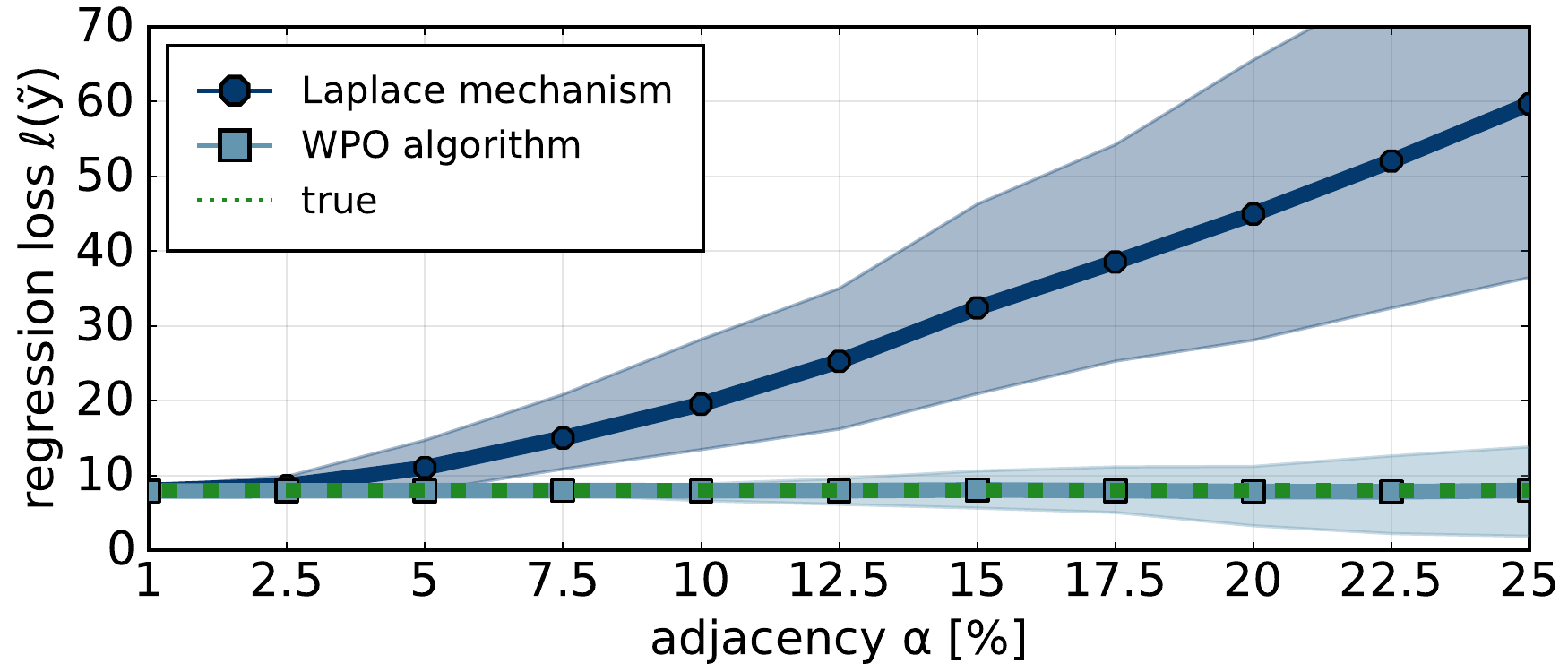}
    \caption{The mean and 90\% confidence band of the regression loss on synthetic datasets for 300 runs of the Laplace mechanism and WPO algorithm.}
    \label{fig:loss_summary}
\end{figure}


\subsection{Synthetic Transmission Data Generation}
We apply the TCO algorithm to a network data release from the IEEE 73-Bus Reliability Test System. To make the case more challenging, we reduce the transmission capacity to $60\%$ of the nominal level to increase network congestion.  We generate $m=10^3$ feasible DC-OPF datasets by sampling demand and generation limits from uniform distributions with bounds $\pm12.5$\% of their nominal values. The cost data is sampled from a uniform distribution $\mathcal{U}(80,100)$ \$/MWh, and we set penalty $\psi=3\cdot10^{3}$ in \eqref{OPF_rlxd:obj} for flow limit violations.  The privacy loss $\varepsilon$ is split according to Theorem \ref{th:DP_OPF}. Finally, we vary adjacency parameter $\alpha$ from 5 to 30 MW and iteration limit $T$ from 1 to 10. 

\begin{figure}
    \centering
    \includegraphics[width=0.46\textwidth]{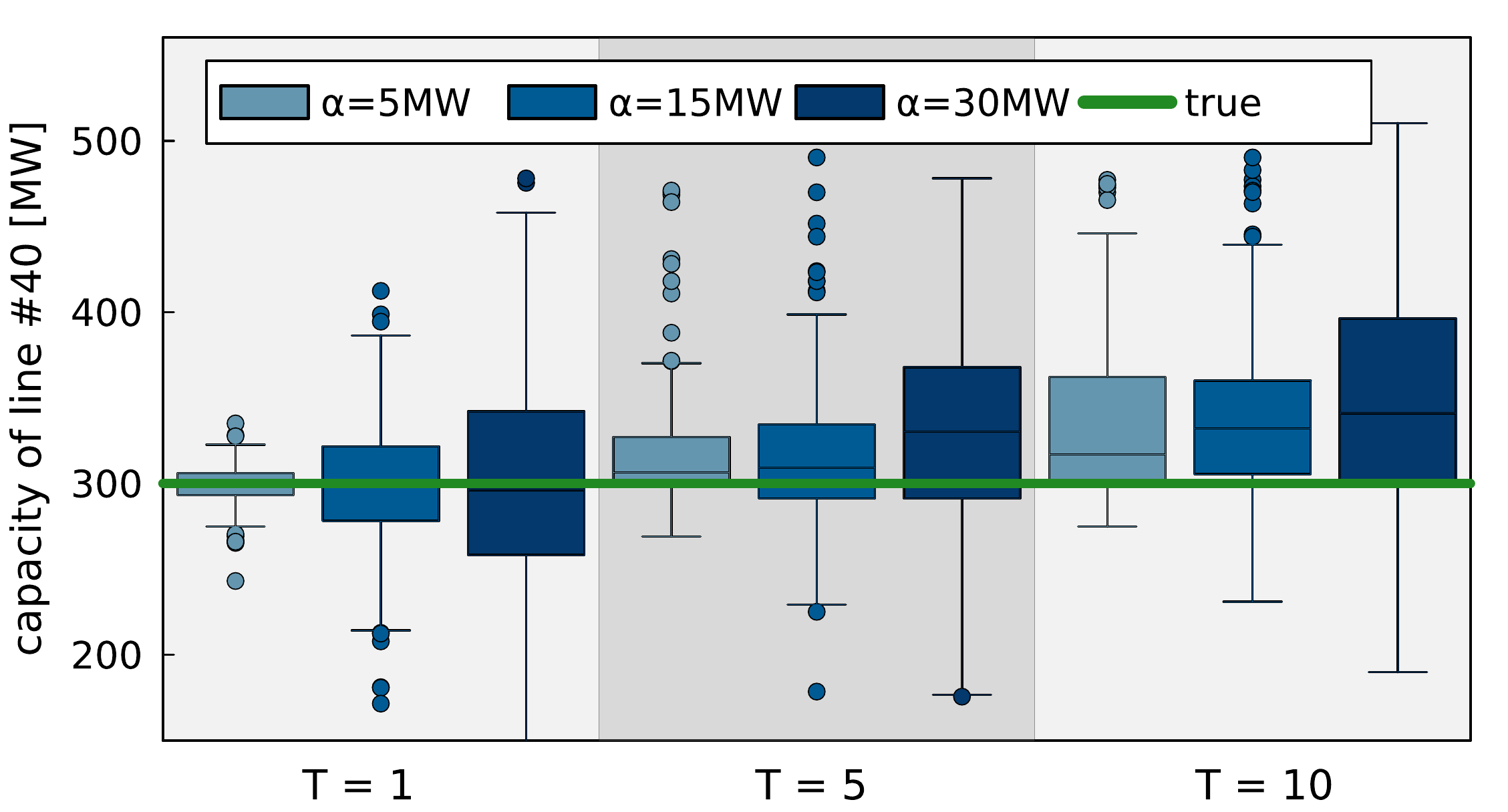}
    \caption{Distributions of obfuscation outcomes for line \#40 across 300 runs of the TCO algorithm for varying adjacency parameter $\alpha$ and iteration limit $T$.}
    \label{fig:TCO_line_40}
    \vspace{0.4cm}
\end{figure}

\begin{figure}
    \centering
    \includegraphics[width=0.48\textwidth]{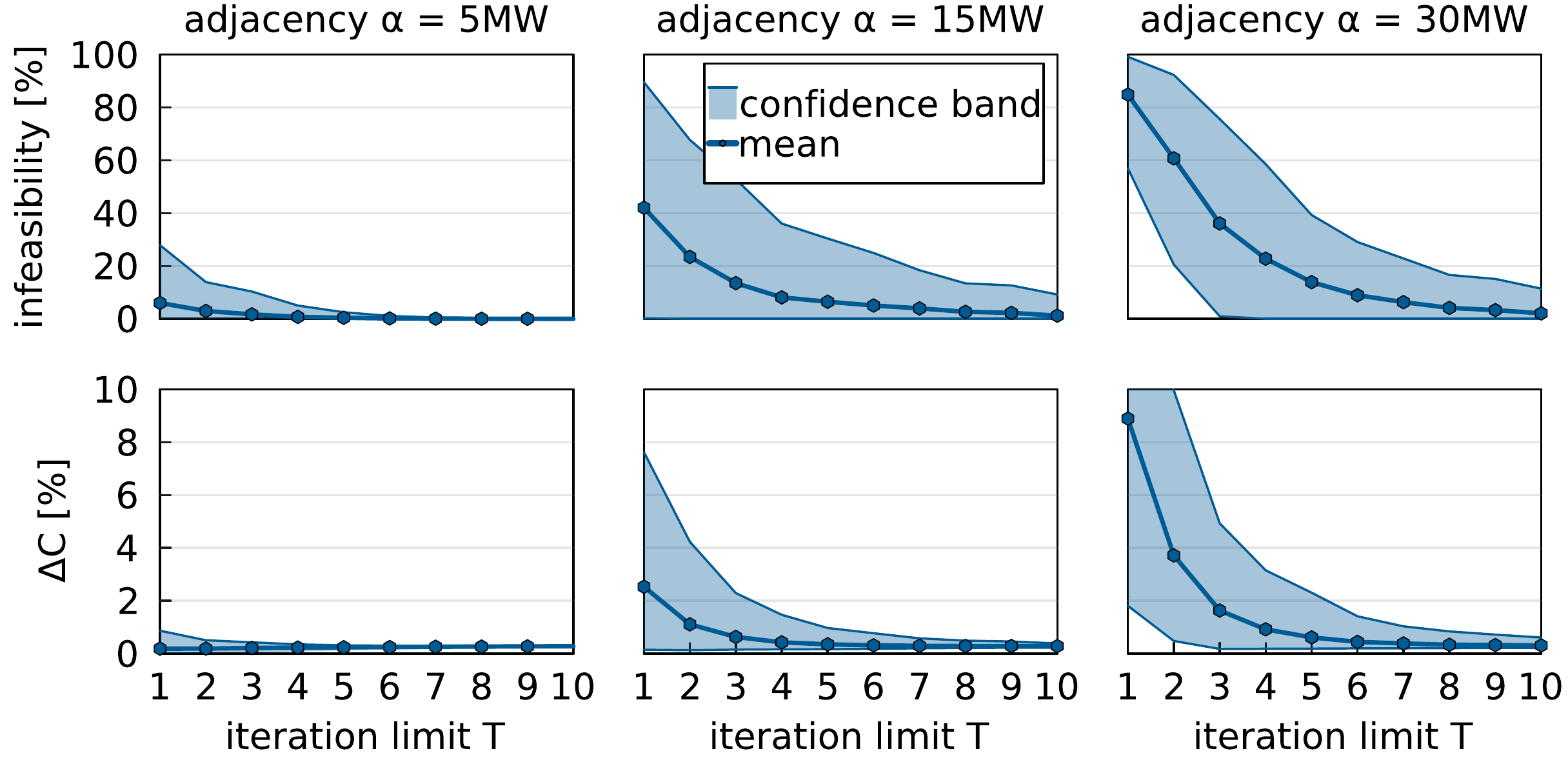}
    \caption{Infeasibility and sub-optimality of synthetic DC-OPF datasets. Top row: percentage of infeasible OPF solutions across a population of $m=1,000$ OPF models. Bottom row: the mean sub-optimality $\Delta\mathcal{C}$ of OPF costs on synthetic datasets. The mean values are provided with 90\% confidence bands.}
    \label{fig:OPF_feasibility_sub_optimality}
    \vspace{0.4cm}
\end{figure}

By increasing $\alpha$, we increase the noise magnitude at Step 1 of the TCO algorithm, resulting in a broader distribution of synthetic dataset outcomes, as depicted by box plots in Fig. \ref{fig:TCO_line_40} for one selected transmission line. However, as noise increases, the probability of obtaining an infeasible synthetic dataset also increases. We thus increase the iteration limit $T$ to improve the accuracy of the synthetic dataset. By setting $T$, we require feasibility and cost-consistency with respect to the set of $T$ worst-case OPF models and outcomes, provided at Steps 2 and 3, respectively. Such deeper post-processing results in distributional shifts, as further shown in Fig. \ref{fig:TCO_line_40} for increasing $T$. The virtue of these shifts is revealed in Fig. \ref{fig:OPF_feasibility_sub_optimality}, where the top row demonstrates how the probability of infeasible OPF outcomes on synthetic datasets reduces as the iteration limit increases. For smaller adjacency, it takes fewer iterations to restore the feasibility of the synthetic dataset. For example, for $\alpha=5$MW, it is enough to leverage $6$ worst-case OPF models in the post-processing optimization at Step 4 to restore feasibility across the entire population of $1,000$ OPF models. For larger adjacency parameters, it takes as much as 10 iterations on average. The bottom row in Fig. \ref{fig:OPF_feasibility_sub_optimality} depicts the mean sub-optimality of OPF models on the synthetic dataset, computed as: 
\begin{align}
    \Delta\mathcal{C}=\frac{1}{m}\sum_{i=1}^{m}\frac{\norm{\mathcal{C}_{i}({\overline{f}}) - \mathcal{C}_{i}^{R}({\overline{\varphi}^{T}})}}{\mathcal{C}_{i}({\overline{f}})}\times100\%.
\end{align}
The sub-optimality of synthetic datasets increases in adjacency parameter $\alpha$, as more noise corrupts the real data. However, as we increase $T$, the OPF cost on synthetic data gets closer to that on the real data. Eventually, the sub-optimality is kept very close to zero without violating the privacy of the real dataset.

\section{Conclusions}\label{sec:conclusions}

We developed two algorithms for privacy-preserving releases of synthetic wind power records and transmission capacity data. The former obfuscates power records by adding Laplacian noise to data and then post-processes the noisy data to privately restore accuracy using a reference machine learning model, thereby improving on the application of the Laplace mechanism alone. The latter enjoys both Laplace and Exponential mechanisms to release cost-consistent transmission data while ensuring the feasibility on a population of heterogeneous OPF models, without the need of drastically scaling the noise. Our results showed that identifying 10 worst-case OPF models suffices to restore data accuracy across the population of 1,000 models, on average.

\appendix
\subsection{Proof of Lemma \ref{lm:reg_sens}}\label{app:lemma1}
The worst-case sensitivity of regression weights is bounded as: 
\begin{subequations}
\begin{align}
    \delta_{\beta} &= \maximize{y \sim_{\alpha} y'}\;\norm{\beta(y) - \beta(y')}_{1}\\
    &=\maximize{y \sim_{\alpha} y'}\;\norm{(X^{\top}X + \lambda I)^{-1}X^{\top}(y -y')}_{1}\label{sens_b:one}\\
    &\leqslant\norm{(X^{\top}X + \lambda I)^{-1}X^{\top}}_{1}\cdot\maximize{y \sim_{\alpha} y'}\;\norm{y -y'}_{1}\label{sens_b:two}\\
    &\leqslant\norm{(X^{\top}X + \lambda I)^{-1}X^{\top}}_{1}\cdot\alpha\label{sens_b:three}
\end{align}
\end{subequations}
where equality \eqref{sens_b:one} is from the closed-form solution to the ridged regression, inequality \eqref{sens_b:two} is due to the  H\"{o}lders inequality, and inequality \eqref{sens_b:three} is due to Definition \ref{def:adjacency} of adjacent datasets. 

The sensitivity of regression loss $\ell$ is bounded as:
\begin{subequations}
\begin{align}
    \delta_{\ell}&=\maximize{y \sim_{\alpha} y'}\;\norm{\ell(y) - \ell(y')}_{1} \\
    &=\maximize{y \sim_{\alpha} y'}\;\norm{\norm{X\beta-y} -\norm{X\beta-y'}}_{1} \\
    &\leqslant\maximize{y \sim_{\alpha} y'}\;\norm{X(\beta(y) - \beta(y'))-(y - y')}\label{sens_l:one}\\
    &=\maximize{y \sim_{\alpha} y'}\;\norm{(X(X^{\top}X + \lambda I)^{-1}X^{\top} - I)(y-y')}\label{sens_l:two}\\
    &=\maximize{i=1,\dots,m}\;\norm{(X(X^{\top}X + \lambda I)^{-1}X^{\top} - I)(e_{i}\circ\alpha)}\label{sens_l:three}
\end{align}
\end{subequations}
where inequality \eqref{sens_l:one} is due to the reverse triangle inequality, equality \eqref{sens_l:two} is from the closed-form solution to the ridged regression. Equality \eqref{sens_l:three} originates from Definition \ref{def:adjacency} of adjacent datasets, i.e., different in one element by at most $\alpha$. It is thus enough to find index $i$ of that element which maximizes the norm. 

\subsection{Proof of Theorem \ref{th:DP_WPO}}\label{app:th_WPO}

The algorithm queries real data in the following computations: 
\begin{enumerate}
    \item Initialization at Step 1 using the Laplace mechanism with parameters $\alpha/\varepsilon_{1}$. Since the worst-case sensitivity of identity queries is $\alpha$ \cite{chatzikokolakis2013broadening}, this computation is $\varepsilon_{1}$-DP by Theorem \ref{th:LM}. 
    \item Estimation of the regression loss on the real data at Step 2 using the Laplace mechanism with parameters $\delta_{\ell}/\varepsilon_{2}$. By Lemma \ref{lm:reg_sens} and Theorem \ref{th:LM}, this estimation is $\varepsilon_{2}-$DP. 
    \item Estimation of regression weights on the real data at Step 2 using the Laplace mechanism with parameters $\delta_{\beta}/\varepsilon_{2}$. By Lemma \ref{lm:reg_sens} and Theorem \ref{th:LM}, this estimation is $\varepsilon_{2}-$DP. 
\end{enumerate}
Note, that the post-processing optimization at Step 3 only uses obfuscated data. Hence, it does not induce any privacy loss per Theorem \ref{th:PP}. Per Theorem \ref{th:SC}, the total privacy loss becomes $\varepsilon_{1} + 2\varepsilon_{2}$, yielding $\varepsilon$ when setting parameters $\varepsilon_{1}=\varepsilon/2$ and $\varepsilon_{2}=\varepsilon/4$.

\subsection{Proof of Theorem \ref{th:DP_OPF}}\label{app:th_OPF}
We follow similar arguments. Algorithm \ref{alg:OPF} queries private transmission capacity vector $\overline{f}$ for the following computations:
\begin{enumerate}
    \item Initial dataset $\overline{\varphi}^{0}$: the algorithm uses a private identity query with privacy budget $\alpha/\varepsilon_{1}$. Since the sensitivity of identity queries on $\alpha-$adjacent datasets is $\alpha$ \cite{chatzikokolakis2013broadening}, by Theorem \ref{th:LM} this computation is $\varepsilon_{1}-$DP.
    \item Worst-case OPF index: found by the discrete variant of the Exponential mechanism with privacy budget $\overline{c}\alpha/\varepsilon_{2}$. Since the sensitivity of the score function $\Delta\mathcal{C}_{i}$ is the same as that of $\mathcal{C}_{i}$, by Theorems \ref{th:PP} and \ref{th:EM} and Assumption \ref{ass:sensitivity}, this is $\varepsilon_{2}-$DP. 
    \item Worst-case OPF cost: Step 3 uses a private identity query of the worst-case OPF cost using privacy budget $\overline{c}\alpha/\varepsilon_{2}$. Per Assumption \ref{ass:sensitivity} and Theorem \ref{th:LM}, this computation is $\varepsilon_{2}-$DP. 
\end{enumerate}
Let $\overline{\varepsilon}$ be the total privacy loss accumulated by the algorithm. Step 1 accumulates privacy loss of $\varepsilon_{1}$. Since Steps 2 and 3 repeat $T$ times, per Theorem \ref{th:SC}, they accumulate the privacy loss of $2T\varepsilon_{2}$. The total loss is then $\overline{\varepsilon} = \varepsilon_{1} + 2T\varepsilon_{2},$ which amounts to $\varepsilon$ when setting DP parameters $\varepsilon_{1}=\varepsilon/2$ and $\varepsilon_{2}=\varepsilon/(4T)$.

\bibliographystyle{IEEEtran}
\bibliography{references}

\begin{thebibliography}{10}
\providecommand{\url}[1]{#1}
\csname url@samestyle\endcsname
\providecommand{\newblock}{\relax}
\providecommand{\bibinfo}[2]{#2}
\providecommand{\BIBentrySTDinterwordspacing}{\spaceskip=0pt\relax}
\providecommand{\BIBentryALTinterwordstretchfactor}{4}
\providecommand{\BIBentryALTinterwordspacing}{\spaceskip=\fontdimen2\font plus
\BIBentryALTinterwordstretchfactor\fontdimen3\font minus
  \fontdimen4\font\relax}
\providecommand{\BIBforeignlanguage}[2]{{%
\expandafter\ifx\csname l@#1\endcsname\relax
\typeout{** WARNING: IEEEtran.bst: No hyphenation pattern has been}%
\typeout{** loaded for the language `#1'. Using the pattern for}%
\typeout{** the default language instead.}%
\else
\language=\csname l@#1\endcsname
\fi
#2}}
\providecommand{\BIBdecl}{\relax}
\BIBdecl

\bibitem{jin2018power}
M.~Jin, J.~Lavaei, and K.~H. Johansson, ``Power grid {AC}-based state
  estimation: Vulnerability analysis against cyber attacks,'' \emph{IEEE Trans.
  Automat. Contr.}, vol.~64, no.~5, pp. 1784--1799, 2018.

\bibitem{chen2019learning}
R.~Chen \emph{et~al.}, ``Learning from past bids to participate strategically
  in day-ahead electricity markets,'' \emph{IEEE Trans. Smart. Grid.}, vol.~10,
  no.~5, pp. 5794--5806, 2019.

\bibitem{dwork2014algorithmic}
C.~Dwork and A.~Roth, ``The algorithmic foundations of differential privacy,''
  \emph{Foundations and Trends{\textregistered} in Theoretical Computer
  Science}, vol.~9, no. 3--4, pp. 211--407, 2014.

\bibitem{dvorkin2020differentially}
V.~Dvorkin \emph{et~al.}, ``Differentially private optimal power flow for
  distribution grids,'' \emph{IEEE Trans. Power Syst.}, vol.~36, no.~3, pp.
  2186--2196, 2021.

\bibitem{dvorkin2020adifferentially}
------, ``Differentially private distributed optimal power flow,'' in
  \emph{2020 59th IEEE Conference on Decision and Control (CDC)}, 2020, pp.
  2092--2097.

\bibitem{zhou2019differential}
F.~Zhou, J.~Anderson, and S.~H. Low, ``Differential privacy of aggregated {DC}
  optimal power flow data,'' in \emph{2019 American Control Conference
  (ACC)}.\hskip 1em plus 0.5em minus 0.4em\relax IEEE, 2019, pp. 1307--1314.

\bibitem{dvorkin2022privacy}
V.~Dvorkin \emph{et~al.}, ``Privacy-preserving convex optimization: When
  differential privacy meets stochastic programming,'' \emph{arXiv:2209.14152},
  2022.

\bibitem{vietri2020new}
G.~Vietri \emph{et~al.}, ``New oracle-efficient algorithms for private
  synthetic data release,'' in \emph{International Conference on Machine
  Learning}.\hskip 1em plus 0.5em minus 0.4em\relax PMLR, 2020, pp. 9765--9774.

\bibitem{hardt2012simple}
M.~Hardt, K.~Ligett, and F.~McSherry, ``A simple and practical algorithm for
  differentially private data release,'' \emph{Advances in neural information
  processing systems}, vol.~25, 2012.

\bibitem{wang2022privacy}
H.~Wang and C.~Wu, ``Privacy preservation for time series data in the
  electricity sector,'' \emph{IEEE Trans. Smart. Grid.}, 2022.

\bibitem{fioretto2019differential}
F.~Fioretto, T.~W. Mak, and P.~Van~Hentenryck, ``Differential privacy for power
  grid obfuscation,'' \emph{IEEE Trans Smart Grid}, vol.~11, no.~2, pp.
  1356--1366, 2019.

\bibitem{mak2019privacy}
T.~W. Mak \emph{et~al.}, ``Privacy-preserving power system obfuscation: A
  bilevel optimization approach,'' \emph{IEEE Trans. Power Syst.}, vol.~35,
  no.~2, pp. 1627--1637, 2019.

\bibitem{chatzikokolakis2013broadening}
K.~Chatzikokolakis \emph{et~al.}, ``Broadening the scope of differential
  privacy using metrics,'' in \emph{Privacy Enhancing Technologies: 13th
  International Symposium, Bloomington, IN, USA.}\hskip 1em plus 0.5em minus
  0.4em\relax Springer, 2013, pp. 82--102.

\bibitem{dwork2006calibrating}
C.~Dwork \emph{et~al.}, ``Calibrating noise to sensitivity in private data
  analysis,'' in \emph{Theory of Cryptography: Third Theory of Cryptography
  Conference, TCC 2006, New York, NY, USA, March 4-7, 2006. Proceedings
  3}.\hskip 1em plus 0.5em minus 0.4em\relax Springer, 2006, pp. 265--284.

\bibitem{mcsherry2007mechanism}
F.~McSherry and K.~Talwar, ``Mechanism design via differential privacy,'' in
  \emph{48th Annual IEEE Symposium on Foundations of Computer Science
  (FOCS'07)}.\hskip 1em plus 0.5em minus 0.4em\relax IEEE, 2007, pp. 94--103.

\bibitem{pozo2017basic}
D.~Pozo, E.~Sauma, and J.~Contreras, ``Basic theoretical foundations and
  insights on bilevel models and their applications to power systems,''
  \emph{Ann. Oper. Res.}, vol. 254, pp. 303--334, 2017.

\bibitem{staffell2016using}
I.~Staffell and S.~Pfenninger, ``Using bias-corrected reanalysis to simulate
  current and future wind power output,'' \emph{Energy}, vol. 114, pp.
  1224--1239, 2016.

\end{thebibliography}

\endgroup
\end{document}